\def\x{\textbf{x}}
\def\f{\bar{f}}
\def\P{\mathbb{P}}
\def\erf{\mathrm{erf}}
\def\N{\mathcal{N}}
\newtheorem{theorem}{Theorem}
\newtheorem{lemma}{Lemma}
\newtheorem{example}{Example}
\title{On the number of equilibria of the replicator-mutator dynamics for noisy social dilemmas}
\author[1]{L. Chen\thanks{lxc156@student.bham.ac.uk}}
\author[2]{C. Deng\thanks{cxd087@student.bham.ac.uk}}
\author[3]{M. H. Duong\thanks{h.duong@bham.ac.uk}}
\author[4]{T. A. Han\thanks{T.Han@tees.ac.uk}}
\affil[1,2]{Jinan-Birmingham Joint Institute, China.}
\affil[3]{School of Mathematics,
University of Birmingham, UK.}
\affil[4]{ School of Computing, Engineering and Digital Technologies, Teesside University, UK}
\date\today
\begin{document}

\maketitle
\begin{abstract}
In this paper, we consider the replicator-mutator dynamics for pairwise social dilemmas where the payoff entries are random variables. The randomness is incorporated to take into account the uncertainty, which is inevitable in practical applications and may arise from different sources such as lack of data for measuring the outcomes, noisy and rapidly changing environments, as well as unavoidable human estimate errors. We analytically and numerically compute the probability that the replicator-mutator dynamics has a given number of equilibria for four classes of pairwise social dilemmas  (Prisoner's Dilemma, Snow-Drift Game, Stag-Hunt Game and Harmony Game). 
As a result, we characterise the qualitative behaviour of such probabilities  as a function of the mutation rate. 
Our results clearly show the influence of the mutation rate   and the uncertainty in the payoff matrix definition on the number of equilibria in these games.
Overall, our analysis has provided novel theoretical contributions to the understanding of the impact of uncertainty on the behavioural diversity in a complex dynamical system.   
\end{abstract}
\section{Introduction}
Evolutionary game theory (EGT), which combines the analysis of game theory with that of dynamic evolutionary processes,  provides a powerful mathematical and simulation framework for the study of dynamics of frequencies of competing strategies in large populations \cite{maynard-smith:1982to,hofbauer:1998mm,axelrod:1981yo}. 
This framework has been successfully used for the investigation  of the evolution  of collective behaviours such as cooperation, coordination, trust and fairness, and recently, for understanding  several pressing societal challenges such as climate change and  pandemics mitigation and advanced technology governance \cite{sigmund:2010bo,perc2017statistical,han2022voluntary,SantosPNAS2011,pereira2021employing,West2007,cimpeanu2021cost}.

However,  existing works on evolutionary game theory mainly focus on deterministic games which could not capture the different random factors that  define the  interactions, in particular the game payoff matrix \cite{nowak:2006bo,sigmund:2010bo,perc2017statistical}. That is, these works assume that the outcomes of interactions for any group of strategists  (aka the game payoff matrix) can be defined in advance with certainty. However, the uncertainty in defining such outcomes is in general unavoidable which can arise from a diversity of possible   sources, including  lack of data for measuring the outcomes, noisy and rapidly changing environments, as well as unavoidable human estimate errors \cite{kahneman2021noise,may2019stability,gross2009generalized}. 

On the other hand, existing works that capture such uncertainty factors in EGT usually assume the payoff entries of the game are general random variables whose distributions are not known a prior \cite{duong2015expected,GokhalePNAS2010,han2012equilibrium,DuongHanJMB2016,DuongHanDGA2020}. In this line,  random social dilemma games where  \cite{Duong2020,Duong2021} were also analysed, where the payoff entries satisfy certain ordering required for  particular social dilemmas. These approaches are useful to provide generic properties of the underlying dynamical systems. 

However, it might be the case that  in many domains/scenarios,  
some knowledge about the payoff entries is available. In particular, some of the payoff entries might
fluctuate around certain known values, which are for example estimated through data analysis or given by domain experts. 
Capturing the available information in the analysis is essential to more accurately describe system dynamics and evolutionary outcomes.  

In this paper, we bridge this gap by analytically investigating the statistic of the number of equilibrium points in  pairwise social dilemma games where some payoff entries are drawn from a random distribution with a known mean value and variance. We will consider the full space of (symmetric) two-player games in which players can choose either to cooperate or defect (see detailed definitions in Section 1.2) \cite{santos:2006aa,sigmund:2010bo}. These games have been shown to provide important abstract frameworks to capture  collective behaviours  in a wide range of biological and social interactions such as cooperation and (anti-)coordination. 
We adopt in our analysis the replicator-mutator dynamics (see Sections 2.1 and 2.2) to model the evolutionary process, capturing both selection and mutation, allowing us to generate insights on how uncertainty in the interaction outcomes and these stochastic factors together influence equilibrium outcomes. Note that most previous works on random games consider replicator equations \cite{han2012equilibrium,duong2015expected,GokhalePNAS2010}, which is a simple version of the replicator-mutator ones where mutation is assumed to be negligible, already general enough to  encompass a variety of biological contexts from ecology to population genetics and from prebiotic to social evolution
\cite{schuster1983replicator}.  A similar setting to our work was considered in \cite{AmaralRoyal2020, AmaralPRE2020} for studying the impact of uncertainty on the evolution of cooperation, but their analysis was purely based on simulations and did not use replicator or replicator-mutator equations.    
Moreover, there has  been a particular interest in studying  average and maximal numbers of equilibrium points in a dynamical system \cite{karlin:1972aa,han2012equilibrium,altenberg2010proof}. Our analysis provides close forms for the probability of a concrete number of equilibria to occur, thus generalising these results.  


The rest of the paper is organised as follows. In Section \ref{section:mainresults} we provide the background regarding replicator-mutator dynamics and summarise the main results of the paper. Detailed proofs will follow in Sections  \ref{sec: T random}-\ref{sec: both T and S are random}.  

\section{Background and Main Results}
\label{section:mainresults}
\subsection{The replicator-mutator dynamics}
The replicator-mutator equation describes the evolution dynamics in a population of different strategies being in co-presence, where selection and mutation are both captured. It is an well established mathematical framework that integrates the unavoidable mutation observed in various  biological and social settings  \cite{traulsen:2009aa,rand2013evolution,zisis2015generosity,mcnamara2013towards,adami2016evolutionary,wang2020robust}. 
This framework  has been utilised   in many application domains, including   evolution of collective behaviours \cite{Imhof-etal2005,nowak:2006bo},  social networks dynamics \cite{Olfati2007}, language evolution
\cite{Nowaketal2001}, population genetics \cite{Hadeler1981}, and autocatalytic reaction networks
\cite{StadlerSchuster1992}. 

We consider an infinitely large population  consisting of $n$ different strategies $S_1,\cdots, S_n$. Their  frequencies are denoted, respectively,  by $x_1,\cdots, x_n$, where $\sum_{i=1}^n x_{i}=1$. These strategies undergo selection where  their  frequency, $S_i$, is determined by its fitness (i.e. average payoff), $f_i$,  obtained through interactions with others in the population. Such interactions  happen within randomly selected pairs of individuals playing a social dilemma game (see details below). 
By means of mutation, individuals in the population  might change their strategy to another randomly selected strategy, given by the so-called mutation matrix: $Q=(q_{ji}), j,i\in\{1,\cdots,n\}$. Here, $q_{ji}$ stands for  the  probability of an $S_j$ individual changing its strategy to $S_i$, satisfying that
\[
\sum_{j=1}^n q_{ji}=1, \quad \forall 1\leq i\leq n.
\]
Denoting vector  $\x = (x_1, x_2, \dots, x_n)$ and  $\f(\x)=\sum_{i=1}^n x_i f_i(\x)$  the population's average fitness,  we can describe  the replicator-mutator equation as follows \cite{Komarova2001JTB,Komarova2004,Komarova2010,Pais2012} 
\begin{equation}
\label{eq: RME}
\dot{x}_i=\sum_{j=1}^n x_j f_j(\x)q_{ji}- x_i \f(\x),\qquad  1\leq i\leq n.
\end{equation}
It is important to note that the replicator dynamics can be reproduced from \eqref{eq: RME} with $q = 0$ (i.e. no mutation). This paper  investigates the equilibrium points of the replicator-mutator dynamics, which are solutions in $[0,1]^n$ of the following system of equations
$$
g_i(\mathbf{x})=0, \quad 1\leq i\leq n,
$$
where $g_i(\x)$, $i=1,\ldots, n$, denotes the right-hand side of \eqref{eq: RME}
$$
g_i(\x):=\sum_{j=1}^n x_j f_j(\x)q_{ji}- x_i \f(\x).
$$
In general, knowing equilibrium points in a dynamical system allows us to  study states where different strategies might co-exist in the population, indicating the possibility of polymorphism.
\subsection{Pairwise social dilemmas}
Now let us consider  a pairwise  game with two strategies $S_1$ and $S_2$, with a genaral payoff matrix given below 
\begin{equation*}
\begin{blockarray}{ccc}
&S_1 & S_2 \\
\begin{block}{c(cc)}
  S_1&a_{11} & a_{12} \\
  S_2&a_{21}&a_{22}\\
  \end{block}
\end{blockarray},
\end{equation*}
where $a_{11}$ is the payoff that a player using strategy $S_1$ obtains when interacting with another player, who is also using strategy $S_1$. Other notations are interpreted similarly. 

Denoting $x$ as $S_1$'s  frequency (and thus $1-x$ as $S_2$'s  frequency),  we can simplify the replicator-mutator equation as follows 
\begin{multline}
\label{eq: 2-2 games 1}
\dot{x}=q_{11}a_{11}x^2+q_{11}x(1-x)a_{12}+q_{21}x(1-x)a_{21}+q_{21}a_{22}(1-x)^2\\-x\Big(a_{11}x^2+(a_{12}+a_{21})x(1-x)+a_{22}(1-x)^2\Big).
\end{multline}
Using the identities $q_{11}=q_{22}=1-q, \quad q_{12}=q_{21}=q$, Equation \eqref{eq: 2-2 games 1} reduces to
\begin{align}
\dot{x}&=\Big(a_{12}+a_{21}-a_{11}-a_{22}\Big)x^3+\Big(a_{11}-a_{21}-2(a_{12}-a_{22})+q(a_{22}+a_{12}-a_{11}-a_{21})\Big)x^2\nonumber
\\&\quad+\Big(a_{12}-a_{22}+q(a_{21}-a_{12}-2a_{22})\Big)x+q a_{22}.\label{eq: 2-2 games}
\end{align}
Next,  we consider a well-established parameterisation of pairwise social dilemmas, for  random game analysis   \cite{santos:2006pn,wang2015universal,szolnoki2019seasonal}. In these games, players can choose to cooperate or defect in each interaction. Mutual cooperation (punishment) would lead to a payoff $R$ ($P$). Unilateral cooperation leads to payoff $S$ while unilateral defection leads to payoff $T$. Without loss of generality, we normalize $R = 1$ and $S = 0$ in all games, and that 
$0 \leq a_{21} = T \leq 2$ and $-1 \leq a_{12} = S \leq 1$. We focus on four important social dilemma games, characterised by different orderings of the payoff entries 
\begin{enumerate}[(i)]
\item the Prisoner's Dilemma (PD): $2\geq T > 1 > 0 > S\geq -1$ (both players defect),
\item the Snow-Drift (SD) game: $2\geq T > 1 > S > 0$ (players prefer unilateral defection to mutual cooperation),
\item the Stag Hunt (SH) game: $1 > T  > 0 > S\geq -1$ (players prefer mutual defection to unilateral cooperation),
\item the Harmony (H) game: $1  > T\geq 0, 1\geq S > 0$ (both players cooperate).
\end{enumerate}
As motivated in the introduction, to be more realistic and to capture the various possible uncertainty, we will consider random games, where $T$ or $S$ or both T and S, are random variables. 
\subsection{Main results}
We obtain explicit analytical formulas for the probability that the replicator-mutator dynamics has a certain number of equilibria for the four social dilemmas above in two different cases, where $T$ or $S$ is random and the other is fixed. The distinction is necessary since $T$ and $S$ might play different roles in the equilibrium outcomes, which can be also seen from our results.
\begin{theorem}[$T$ is random] Suppose that $T$ is normally distributed with mean $T_0$ and variance $\sigma^2$, i.e. $T\sim\N(T_0,\sigma^2)$, and $S$ is a given number where $T_0$ and $S$ satisfy the corresponding ordering in each of the  social dilemmas above. Let $T_1, T_2,T_3$ and $s_1, s_2$ be defined in \eqref{eq: T1}, \eqref{eq: T2} and \eqref{eq: s1s2T3}. Then, for all games, the probability that the replicator-mutator has 2 equilibria is given by
$$
p_2= \dfrac{1}{2}-\dfrac{1}{2}\erf\Big(\dfrac{T_3-T_0}{\sigma\sqrt{2}}\Big).
$$
The probability that the replicator-mutator has three equilibria is given as follows.  First,  for SD and H games
$$
p_3=1-p_2.
$$
Now, for PD and SH games
\begin{equation*}
    p_3=\begin{cases}
    \dfrac{1}{2}\left[1+\erf\Big(\dfrac{T_1-T_0}{\sigma\sqrt{2}}\Big)\right], \quad\text{if}\quad  S \leq s_1,\\\\
    \dfrac{1}{2}\left[1+\erf\Big(\dfrac{T_1-T_0}{\sigma\sqrt{2}}\Big)+\erf\Big(\dfrac{T_3-T_0}{\sigma\sqrt{2}}\Big)-\erf\Big(\dfrac{T_2-T_0}{\sigma\sqrt{2}}\Big)\right], \quad\text{if}\quad  s_1<S<s_2,\\\\
     \dfrac{1}{2}\left[1+\erf\Big(\dfrac{T_3-T_0}{\sigma\sqrt{2}}\Big)\right], \quad\text{if}\quad S \geq s_2.
    \end{cases}
\end{equation*}
Thus the probability that the replicator-mutator has one equilibrium is, for SD and H games $p_1=0$, and for PD and SH games, $p_1=1-p_2-p_3$.

Furthermore, as a function of $q$, the probability $p_2$ is decreasing in SD and H games, but is increasing in PD and SH games and satisfies the following small mutation limit 
$$
\lim_{q\rightarrow 0}p_2=\begin{cases}
1, \quad \text{in SD and H games},\\
0, \quad \text{in PD and SH games}.
\end{cases}
$$
In addition, in SD and H games, it holds that
\begin{equation*}
p_2\geq \frac{1}{2}-\frac{1}{2}\erf\Big(-\frac{S+T_0}{\sigma\sqrt{2}}\Big)>\frac{1}{2}.
\end{equation*}
\end{theorem}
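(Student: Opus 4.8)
The plan is to reduce the equilibrium count to a root-counting problem for a quadratic and then to integrate the Gaussian density of $T$ over the $T$-intervals that realise each count. Substituting the normalised payoffs $a_{11}=R=1$, $a_{22}=P=0$, $a_{12}=S$, $a_{21}=T$ into \eqref{eq: 2-2 games}, the constant term $q a_{22}$ vanishes, so the right-hand side factors as $\dot x=x\,Q(x)$ with $Q(x)=(S+T-1)x^2+\bigl(1-T-2S+q(S-1-T)\bigr)x+\bigl(S+q(T-S)\bigr)$. Hence $x=0$ is always an equilibrium, $x=1$ never is (indeed $\dot x|_{x=1}=-q<0$), and the number of equilibria in $[0,1]$ equals $1$ plus the number of roots of $Q$ in $(0,1)$. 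The whole argument is driven by the boundary values $Q(0)=S(1-q)+qT$ and $Q(1)=-q<0$: whenever $Q(0)>0$, the quadratic has opposite signs at the two endpoints and therefore exactly one root in $(0,1)$, regardless of the sign of its leading coefficient, giving exactly two equilibria. Thus ``two equilibria'' is equivalent to $Q(0)>0$, i.e.\ to $T>T_3$ where $T_3$ is the zero of $Q(0)$, namely $T_3=-S(1-q)/q$. Since $T\sim\N(T_0,\sigma^2)$ and $\P(T\le t)=\tfrac12\bigl[1+\erf\bigl(\tfrac{t-T_0}{\sigma\sqrt2}\bigr)\bigr]$, this yields $p_2=\P(T>T_3)=\tfrac12-\tfrac12\erf\bigl(\tfrac{T_3-T_0}{\sigma\sqrt2}\bigr)$ uniformly across all four games.

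When $T<T_3$ (so $Q(0)<0$) the quadratic has either $0$ or $2$ roots in $(0,1)$, i.e.\ one or three equilibria, and the transition is governed by the discriminant $\Delta(T)=B^2-4AC$ (a quadratic in $T$) together with the position of the vertex $-B/(2A)$ relative to $(0,1)$. For the Snow-Drift and Harmony games $S>0$ forces $T_3<0$, and one verifies that for every $T<T_3$ there are two roots in $(0,1)$ (for instance, as $T\to-\infty$ the roots tend to $q$ and to $1^-$), so $p_1=0$ and $p_3=1-p_2$. For the Prisoner's Dilemma and Stag-Hunt games $S<0$ gives $T_3>0$, the two zeros $T_1<T_2$ of $\Delta$ become relevant, and the critical values $s_1,s_2$ are exactly where $T_1,T_2$ change their ordering relative to $T_3$ and to the vertex constraint; the three-case expression for $p_3$ is then the Gaussian measure of the corresponding union of intervals ($\{T<T_1\}$, $\{T_2<T<T_3\}$, and so on), each interval endpoint contributing one $\erf$ term, with $p_1=1-p_2-p_3$.

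The three remaining assertions follow from the closed form $T_3=-S(1-q)/q=S-S/q$ and the strict monotonicity of $\erf$. Differentiating, $\partial_q T_3=S/q^2$ has the sign of $S$, so $T_3$ increases in $q$ for SD and H and decreases for PD and SH; as $p_2=\tfrac12-\tfrac12\erf\bigl(\tfrac{T_3-T_0}{\sigma\sqrt2}\bigr)$ is strictly decreasing in $T_3$, the probability $p_2$ is decreasing in $q$ for SD and H and increasing for PD and SH. Letting $q\to0$ gives $T_3\to-\infty$ when $S>0$ and $T_3\to+\infty$ when $S<0$, hence $p_2\to1$ in SD and H and $p_2\to0$ in PD and SH. Finally, for SD and H the elementary comparison $T_3\le-S$ (which holds in the mutation regime $q\le\tfrac12$) and monotonicity of $\erf$ give $p_2\ge\tfrac12-\tfrac12\erf\bigl(-\tfrac{S+T_0}{\sigma\sqrt2}\bigr)$, and the right-hand side exceeds $\tfrac12$ since $-S<0<T_0$.

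The main obstacle is the root-counting for PD and SH in the regime $T<T_3$: one must control the sign of the leading coefficient $A=S+T-1$, the discriminant $\Delta(T)$, the vertex abscissa and the endpoint signs \emph{simultaneously}, as functions of $T$ with $S$ and $q$ as parameters, and then determine how the tangency thresholds $T_1,T_2$ interleave with $T_3$ as $S$ varies --- which is precisely what fixes $s_1,s_2$ and the three-case formula. By contrast, the SD and H count, the Gaussian integration, and the monotonicity, limit and bound statements are all routine once this case analysis and the explicit value of $T_3$ are in hand.
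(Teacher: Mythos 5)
Your overall strategy coincides with the paper's: reduce the equilibrium count to root-counting for the quadratic $Q(x)$ (the paper's $h(x)$) on $(0,1)$, characterise ``two equilibria'' by a threshold condition $T>T_3$, and obtain $p_3$ from the tangency values $T_1,T_2$ of the discriminant viewed as a quadratic in $T$. The parts you actually carry out are correct. Your derivation of $p_2$ is in fact a slightly cleaner route: you use $Q(1)=-q<0$ together with the endpoint sign change $Q(0)=(1-q)S+qT>0$, whereas the paper substitutes $t=x/(1-x)$ and uses Vieta on $g(t)$ (one positive root iff $t_1t_2=-c/q<0$, i.e.\ $c=Q(0)>0$); both give the same threshold $T_3=(q-1)S/q$ and both kill the tangency event by continuity of the Gaussian law. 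Your treatment of the monotonicity in $q$, the $q\to 0$ limits, and the SD/H lower bound (via $T_3\le -S$ for $q\le\tfrac12$) is correct and equivalent in substance to the paper's Lemma \ref{lem: properties of p2}.

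The genuine gap is the three-equilibria part, and you concede it yourself in your final paragraph. Two things are missing. First, for SD and H games you assert that \emph{every} $T<T_3$ produces two roots of $Q$ in $(0,1)$, offering only the asymptotics as $T\to-\infty$; that is a heuristic, not a proof. The paper proves it by showing that for $S>s_2$ (automatic when $S>0$) the discriminant $\Delta$, as a quadratic in $T$, has negative discriminant $q(2q-1)[q-(q-1)S]$, hence $\Delta>0$ for all $T$, and that the sum-of-roots threshold $\tfrac{1-(S+1)q}{1-q}$ lies strictly above $T_3$, so $T<T_3$ already forces both roots to be admissible; this is what yields $p_1=0$ and $p_3=1-p_2$. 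Second, for PD and SH games the displayed three-case formula for $p_3$ rests entirely on explicit interleaving inequalities that you never establish: for $s_1<S<s_2$ one must show $T_2<T_3<\tfrac{1-(S+1)q}{1-q}$, so that the three-equilibria event is $\{T<T_1\}\cup\{T_2<T<T_3\}$, while for $S\le s_1$ one must show $T_1\le\tfrac{1-(S+1)q}{1-q}\le T_2$ and $T_3\ge \tfrac{1-(S+1)q}{1-q}$, so that the event collapses to $\{T<T_1\}$ (and, at $S=s_1$, that $T_3$ and the sum-of-roots threshold cross, which is what makes $s_1$ the case boundary). These comparisons are the actual mathematical content behind the theorem's $p_3$ formula; identifying that they are needed, as you do, is a plan rather than a proof. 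Everything else in your write-up would go through once they are supplied.
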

\begin{theorem}[$S$ is random]Suppose that $S$ is normally distributed with mean $S_0$ and variance $\sigma^2$, i.e. $S\sim\N(S_0,\sigma^2)$, and $T_0$ is a given number so that $T_0$ and $S$ satisfy the corresponding ordering in each of the  social dilemmas above. Let $S_1$, $S_2$ be defined in \eqref{S1S2}. The probabilities that the replicator-mutator dynamics has $1$,$2$ and $3$ equilibria are given by, respectively,
\begin{align*}
p_1&=1-p_2-p_3,\\
p_2&=\frac{1}{2}-\frac{1}{2}\erf(-\frac{\frac{qT_0}{1-q}-S_0}{\sqrt{2}\eta}),\\
p_3&=
\left\{
   \begin{aligned}
    &\frac{1}{2}+\frac{1}{2}\erf(\frac{S_1-S_0}{\sqrt{2}\eta}), &\text{if } &T_0>\frac{(1-q)^2}{1-2q},\\
    &\frac{1}{2}+\frac{1}{2}\erf(\frac{S_1-S_0}{\sqrt{2}\eta})+\frac{1}{2}\erf(\frac{\frac{-qT_0}{1-q}-S_0}{\sqrt(2}\eta)-\frac{1}{2}\erf(\frac{S_2-S_0}{\sqrt{2}\eta}), &\text{if } &T_0<\frac{(1-q)^2}{1-2q}.
   \end{aligned} 
\right..
  \end{align*}
 As a consequence, $p_2$ is always increasing as a function of $q$ and satisfies the following lower and upper bounds
 $$
\frac{1}{2}-\frac{1}{2}\erf(\frac{-S_0}{\sqrt{2}\eta})< p_2< \frac{1}{2}-\frac{1}{2}\erf(\frac{-T_0-S_0}{\sqrt{2}\eta}).
$$
\end{theorem}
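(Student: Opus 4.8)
The plan is to reduce the equilibrium count to a root-location problem for a single quadratic and then integrate the Gaussian density of $S$ over the resulting intervals. First I would substitute the normalisation $a_{11}=1$, $a_{22}=0$, $a_{12}=S$, $a_{21}=T_0$ into \eqref{eq: 2-2 games}. Because $a_{22}=0$ the constant term $qa_{22}$ vanishes, so $x=0$ is always an equilibrium and the cubic factors as $\dot x=x\,Q(x)$ with $Q(x)=(S+T_0-1)x^2+\beta x+\gamma$, where $\beta=1-T_0-2S+q(S-1-T_0)$ and $\gamma=Q(0)=(1-q)S+qT_0$. Hence the number of equilibria in $[0,1]$ equals $1+N$, where $N\in\{0,1,2\}$ is the number of roots of $Q$ in $(0,1)$, which already explains why only the counts $1,2,3$ arise. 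The decisive fact is that $Q(1)=-q<0$ for every $S$, so $x=1$ is never an equilibrium and $Q$ is negative at the right endpoint.

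Next I would classify $N$ as a function of $S$. Since $Q(1)<0$, the sign of $\gamma=Q(0)=(1-q)S+qT_0$ is decisive: if $\gamma>0$, i.e. $S>-\tfrac{qT_0}{1-q}$, then $Q$ changes sign exactly once on $(0,1)$, so $N=1$ and there are two equilibria, whence $p_2=\P\bigl(S>-\tfrac{qT_0}{1-q}\bigr)$; if $\gamma<0$ then $Q$ has the same sign at both endpoints and $N\in\{0,2\}$, with $N=2$ precisely when $A=S+T_0-1<0$, the discriminant $\Delta(S)=\beta^2-4A\gamma>0$, and the vertex $-\beta/(2A)$ lies in $(0,1)$. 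Viewing $\Delta$ as a quadratic in $S$ (its $S^2$-coefficient equals $q^2>0$, so it opens upward) produces the two roots $S_1<S_2$ appearing in \eqref{S1S2}; checking the limit $S\to-\infty$ shows all three conditions hold there, so the three-equilibria set always contains the half-line $(-\infty,S_1)$ and possibly the extra interval $\bigl(S_2,-\tfrac{qT_0}{1-q}\bigr)$.

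The two branches of $p_3$ correspond to whether this extra interval is present, i.e. whether $S_2$ lies below $-\tfrac{qT_0}{1-q}$. I would locate the transition by imposing $\gamma=0$ and $\beta=0$ together (a double root of $Q$ at $x=0$) and eliminating $S$, which gives exactly $T_0=\tfrac{(1-q)^2}{1-2q}$, the stated threshold. Integrating the density of $S\sim\N(S_0,\sigma^2)$ over these intervals through $\P(S<u)=\tfrac12+\tfrac12\erf\bigl(\tfrac{u-S_0}{\sqrt2\,\eta}\bigr)$ turns the endpoints $S_1$, $S_2$ and $-\tfrac{qT_0}{1-q}$ into the claimed $\tfrac12\pm\tfrac12\erf(\cdot)$ expressions, with $p_1=1-p_2-p_3$. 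For the monotonicity and bounds I would note that the $p_2$-threshold $-\tfrac{qT_0}{1-q}$ is strictly decreasing in $q$ on $(0,\tfrac12)$, so $p_2$ increases in $q$, and that evaluating it at $q\to0^+$ and $q\to\tfrac12^-$, where it tends to $0$ and to $-T_0$, gives the lower and upper bounds.

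The hard part will be the root-location analysis of the second step: establishing that the event of $Q$ having two roots in $(0,1)$ is equivalent to the simultaneous conditions on the sign of $A$, the sign of $\Delta$, and the vertex position, and then tracking which constraint is active as $S$ and $T_0$ vary so that the three-equilibria set is $(-\infty,S_1)$ in one regime and $(-\infty,S_1)\cup\bigl(S_2,-\tfrac{qT_0}{1-q}\bigr)$ in the other. Once these intervals are fixed, the Gaussian integration and the monotonicity and bound computations are routine.
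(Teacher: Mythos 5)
Your reduction to the quadratic factor $Q$ with the observation $Q(1)=-q<0$ is sound and in fact exactly parallels the paper's device: the paper substitutes $t=x/(1-x)$ to get $g(t)=-qt^2+(-q-a+c)t+c$, whose leading coefficient $a+b+c=-q$ is your $Q(1)$ and whose discriminant coincides with $\beta^2-4A\gamma$, so your $S_1,S_2$ agree with \eqref{S1S2}; your treatment of $p_2$, of the monotonicity in $q$, and of the bounds also matches the paper's Lemma 2. The genuine gap is in the branch analysis for $p_3$. You claim the two branches correspond to whether the interval $\bigl(S_2,-\tfrac{qT_0}{1-q}\bigr)$ is nonempty, i.e.\ whether $S_2<-\tfrac{qT_0}{1-q}$. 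That criterion is false: since $\gamma=c$ vanishes at $S=-\tfrac{qT_0}{1-q}$, there $\Delta=\beta^2\geq 0$, so $-\tfrac{qT_0}{1-q}$ can never lie strictly inside $(S_1,S_2)$; the paper's ratio computation shows $S_2\leq -\tfrac{qT_0}{1-q}$ for \emph{all} admissible $T_0$, with equality exactly at $T_0=\tfrac{(1-q)^2}{1-2q}$. Hence your ``extra interval'' is nonempty on both sides of the threshold, and your proposed mechanism cannot distinguish the two regimes.

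What actually discriminates them is the third root-location condition that you list but never use: the vertex condition $-\beta/(2A)\in(0,1)$ together with $A<0$, which in the transformed variables is the single inequality $q+a-c<0$, i.e.\ $S<\tfrac{(q-1)(T_0-1)}{q}$ (positivity of the sum of roots of $g$). The case distinction comes from where this threshold sits: when $T_0>\tfrac{(1-q)^2}{1-2q}$ one has $S_1<\tfrac{(q-1)(T_0-1)}{q}<S_2<-\tfrac{qT_0}{1-q}$, so on $\bigl(S_2,-\tfrac{qT_0}{1-q}\bigr)$ both roots of $g$ are negative, the interval contributes nothing, and $p_3=\P(S<S_1)$; when $T_0<\tfrac{(1-q)^2}{1-2q}$ one has $\tfrac{(q-1)(T_0-1)}{q}>-\tfrac{qT_0}{1-q}$, the constraint is inactive, and the interval survives. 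Your computation of the threshold by imposing $\beta=\gamma=0$ is a correct calculation (at that point $S_2$, $-\tfrac{qT_0}{1-q}$ and $\tfrac{(q-1)(T_0-1)}{q}$ all coincide), but without comparing $\tfrac{(q-1)(T_0-1)}{q}$ against $S_1$, $S_2$ and $-\tfrac{qT_0}{1-q}$ the regime-(i) formula cannot be derived. The same comparisons are needed to repair your adjacent shortcut: checking the conditions only in the limit $S\to-\infty$ does not establish that the entire half-line $(-\infty,S_1)$ lies in the three-equilibria set; one must verify that $S_1$ is below all three thresholds, as the paper does.
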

\vspace{0.5cm}
In the following we will provide detailed proofs for these main results. In Section \ref{sec: T random} we study the case where $T$ is random and $S$ is deterministic. We compute the probability that the replicator-mutator dynamics has $p_k$ ($k\in\{1,2,3\}$) equilibria, both analytically and numerically by   sampling the payoff matrix space. Similar results where $S$ is random and $T$ is deterministic are obtained in Section \ref{sec: S random}. In Section \ref{sec: both T and S are random} we numerically investigate the case where both $T$ and $S$ are random. Summary and outlook is given in the final section,  Section \ref{sec: summary}.

\section{$T$ is random}
\label{sec: T random}
We first consider the case where only $T$ is random. Specifically, we assume that
\begin{equation}
    T=T_0+\varepsilon_T,
\end{equation}
where $\varepsilon_T$ is a centered random variable. Suppose that $\varepsilon_T$ is a centered normal distribution, $\varepsilon_T \sim \N(0,\sigma^2)$, and $T_0$ is a fixed number.  It follows that
\[T \sim \N(T_0,\sigma^2).\]

This means that we have partial information about the value of $T$, which is randomly fluctuating (perturbed) around a deterministic value $T_0$. In practical applications, this may come from estimations based on expert's advice or data simulations. By taking the value of the variance smaller and smaller, the value of $T$ is more and more concentrated around $T_0$, and in particular, by sending the variance to zero, we expect to recover deterministic games.  
\subsection{Equilibrium points}
\label{sec: equilibrium points}
By simplifying the right hand side of \eqref{eq: 2-2 games}, equilibria of a social dilemma game are roots in the interval $[
0,1]$ of the following cubic equation 
\begin{align}
\Big(T+S-1\Big)x^3+\Big(1-T-2S+q(S-1-T)\Big)x^2+\Big(S+q(T-S)\Big)x = 0.\label{eq: 2-2 games2}
\end{align}
It follows that $x = 0$ is always an equilibrium. If $q=0$, \eqref{eq: 2-2 games2} reduces to
$$
(T+S-1)x^3+(1-T-2S)x^2+Sx=0,
$$
which has solutions
$$
x=0, \quad x=1, \quad x^*=\frac{S}{S+T-1}.
$$
Note that for SH and SD games $x^*\in (0,1)$, thus it is always an (internal) equilibrium. On the other hand, for PD-games and H-games, $x^* \not\in (0,1)$, thus it is not an equilibrium.

If $q=\frac{1}{2}$ then the above equation has two solutions $x_1=\frac{1}{2}$ and $x_2=\frac{T+S}{T+S-1}$. In PD, SD and H games, $x_2\not \in (0,1)$, thus they have two equilibria $x_0=0$ and $x_1=\frac{1}{2}$. In the SH game: if $T+S<0$ then the game has three equilibria $x_0=0, x_1=\frac{1}{2}$ and $0<x_2<1$; if $T+S\geq 0$ then the game has only two equilibria $x_0=0, x_1=\frac{1}{2}$.

Now we consider the case $0<q<\frac{1}{2}$. For non-zero equilibrium points we solve the following quadratic equation
\begin{equation}
\label{eq: social dilemma}
h(x):=(T+S-1)x^2+(1-T-2S+q(S-1-T))x+S+q(T-S)=:ax^2+bx+c=0,
\end{equation}
where we define
\begin{equation}
\label{eq: abc}    
a=T+S-1, \quad b=1-T-2S+q(S-1-T), \quad c= S+q(T-S).
\end{equation}
Set $t:=\frac{x}{1-x}$, then we obtain 
$$
\frac{h(x)}{(1-x)^2}=(a+b+c) t^2+(b+2c)t+c=-qt^2+(-q-a+c)t+c:=g(t).
$$
Thus, an equilibrium point of a social dilemma can be found from a positive solution of the following quadratic equation
\begin{equation}
\label{eq: g}    
g(t)=-qt^2+(-q-a+c)t+c=-q t^2 - ((1-q)(T-1)+qS)t + S+q(T-S).
\end{equation}
Using this relation, subsequently we provide numerical simulations and analytical results for the probability $p_k$ that each of the game mentioned above has a certain number, $k\in \{1,2,3\}$, of equilibria. For the quadratic function $g(t)$, the discriminant is given by
\begin{equation}
    \Delta=(q+a-c)^2+4qc,
\end{equation}
where $a$ and $c$ are defined in terms of $T$ and $S$ in \eqref{eq: abc}. The number of positive roots of $g$ is characterized in the following three cases:
\begin{enumerate}[(a)]
    \item  $g$ has no positive roots, which happens when $g$ has no real roots ($\Delta<0$) or $g$ has only negative roots ( $\Delta\geq0, t_1\leq0, t_2\leq0$). In this case, the replicator-mutator equation has only one equilibrium $x=0$.
    \item $g$ has one positive root, which happens when $g$ has a positive double root ($\Delta=0, t_1=t_2>0$) or  when $g$ has one positive and one negative root ($\Delta>0, t_1 t_2<0$). In this case, the replicator-mutator equation has two equilibria.
    \item $g$ has two positive roots ($\Delta>0, t_1+t_2>0, t_1 t_2>0$), thus the replicator-mutator equation has three equilibria.
\end{enumerate}
Using \eqref{eq: abc} we can write
\begin{eqnarray}
\Delta & = & (q+a-c)^2+4qc \nonumber \\
~ & = & [q+T+S-1-(S+qT-qS)]^2+4q[S+q(T-S)] \nonumber \\
~ & = & (1-q)^2T^2+2[(q-q^2)S+q^2+2q-1]T+q^2S^2+2q(1-q)S+(q-1)^2, \nonumber
\end{eqnarray}
which is a quadratic function of $T$ whose discriminant is given by
\begin{equation*}
    2q^3-q^2-(2q^3-3q^2+q)S=q(2q-1)[q-(q-1)S].
\end{equation*}
Therefore
\begin{enumerate}[(i)]
    \item If $2q^3-q^2-(2q^3-3q^2+q)S>0$, then $\Delta=0$ has two distinct solutions.
    \item If $2q^3-q^2-(2q^3-3q^2+q)S=0$, then $\Delta=0$ has only one solution.
    \item If $2q^3-q^2-(2q^3-3q^2+q)S<0$, then $\Delta=0$ has no real solutions.
\end{enumerate}
\noindent If $(q-1)S\leq q$, then $\Delta=0$ has solutions
\begin{align}
T_1&=\frac{-((1-q)qS+q^2+2q-1)-2\sqrt{2q^3-q^2-(2q^3-3q^2+q)S}}{(1-q)^2},\label{eq: T1}\\
T_2&=\frac{-((1-q)qS+q^2+2q-1)+2\sqrt{2q^3-q^2-(2q^3-3q^2+q)S}}{(1-q)^2}.\label{eq: T2}
\end{align}
\noindent In addition, we also have the following:
\begin{align*}
t_1+t_2&=  \dfrac{q+a-c}{-q}=  \dfrac{q+T+S-1-(S+qT-qS)}{-q}  =  \dfrac{q+T-qT+qS-1}{-q},\\
t_1t_2 &= \dfrac{c}{-q} = \dfrac{S+qT-qS}{-q}.
\end{align*}
 \noindent Define
\begin{equation}
\label{eq: s1s2T3}    
s_1=\dfrac{-q(1-q)}{1-2q}, \quad s_2=\dfrac{q}{q-1},\quad \text{and}\quad T_3=\dfrac{(q-1)S}{q}.
\end{equation}
Note that since $0<q<1/2$, $s_1\in(-\infty,0)$, $s_2\in(-1,0)$ and $s_1<s_2$. Furthermore,the condition that $(q-1)S\leq q$ will always hold  when $S>0$, that is for SD and H games. 
\subsection{Probability that the replicator-mutator has two equilibria}
We first compute the probability $p_2$ that the replicator-mutator equation has two equilibria, which amounts to compute the probability that $\Delta=0, t_1=t_2>0$ or $\Delta>0, t_1t_2<0$:
$$
p_2=\mathbb{P}\Big\{\Delta=0, t_1=t_2>0\}+\mathbb{P}\Big\{\Delta>0, t_1t_2<0\}=p_{21}+p_{22}.
$$
\noindent We note that when $\Delta=0$ has no solution, $\mathbb{P}\{\Delta=0, t_1=t_2>0\}=0$. When $\Delta=0$ has two solutions $T_1$ and $T_2$ ($T_1$ may equal to $T_2$ here, but it does not matter), then $\mathbb{P}\{\Delta=0, t_1=t_2>0\}$ is given by
\[P(T=T_1)+P(T=T_2).\] 
\noindent Since in both cases (Gaussian and uniform distributions) $T$ follows a continuous distribution, we have $P(T=T_1)=P(T=T_2)=0$. It follows that $p_{21}=0$. Thus 
$$
p_2=p_{22}=\mathbb{P}\{\Delta>0, t_1t_2<0\},
$$
which is equivalent to the probability that $T>T_3$. 
Since $T\sim \mathcal{N}(T_0,\sigma^2)$, we obtain
\begin{align}
p_2 & = \mathbb{P}(T>T_3) \label{eq: p2T}\\
&= \int_{T_3}^{\infty} \dfrac{1}{\sigma\sqrt{2\pi}}\exp\Big({\dfrac{-(x-T_0)^2}{2\sigma^2}}\Big) dx\notag 
\\&=\dfrac{1}{\sqrt{\pi}}\int_{\frac{T_3-T_0}{\sigma\sqrt{2}}}^{\infty} \exp(-t^2) dt\quad (\text{by changing of variable}~ t=\frac{x-T_0}{\sigma\sqrt{2}})\notag
\\&=\dfrac{1}{\sqrt{\pi}}\left[\int_{0}^{\infty} \exp(-t^2) dt-\int_0^{\frac{T_3-T_0}{\sigma\sqrt{2}}} \exp(-t^2) dt\right]\notag\\
& =  \dfrac{1}{2}-\dfrac{1}{2}\erf\Big(\dfrac{T_3-T_0}{\sigma\sqrt{2}}\Big)\notag
\\&= \dfrac{1}{2}-\dfrac{1}{2}\erf\Big(\dfrac{(q-1)S/q-T_0}{\sigma\sqrt{2}}\Big),\notag
\end{align}
where $\erf(\cdot)$ is the error function
$$
\mathrm{erf}(z):=\frac{2}{\sqrt{\pi}}\int_0^z e^{-t^2}\,dt.
$$
The following lemma presents some interesting qualitative properties of $p_2$.
\begin{lemma}
\label{lem: properties of p2}
As a function of $q$, the probability $p_2$ is decreasing in SD and H games, but is increasing in PD and SH games. As a consequence, for SD and H games, it holds that
\begin{equation}
\label{eq: lowerbound p2}
p_2\geq \frac{1}{2}-\frac{1}{2}\erf\Big(-\frac{S+T_0}{\sigma\sqrt{2}}\Big)>\frac{1}{2}.
\end{equation}
In addition,
$$
\lim_{q\rightarrow 0}p_2=\begin{cases}
1, \quad \text{in SD and H games},\\
0, \quad \text{in PD and SH games}.
\end{cases}
$$
\end{lemma}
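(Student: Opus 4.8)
The plan is to reduce the entire lemma to the behaviour of the single quantity $T_3=(q-1)S/q$ appearing in the closed form $p_2=\frac{1}{2}-\frac{1}{2}\erf\big(\frac{T_3-T_0}{\sigma\sqrt{2}}\big)$ established in \eqref{eq: p2T}. Since $\erf$ is strictly increasing, $p_2$ is a strictly \emph{decreasing} function of $T_3$, so everything is governed by how $T_3$ depends on $q$ together with the sign of $S$. First I would rewrite $T_3=S-S/q$ and differentiate, $\partial_q T_3=S/q^2$, so that $T_3$ is increasing in $q$ exactly when $S>0$ and decreasing when $S<0$. Composing with the monotonicity of $\erf$ through the chain rule gives $\partial_q p_2=-\frac{1}{\sqrt{\pi}\,\sigma\sqrt{2}\,q^2}\,e^{-(T_3-T_0)^2/(2\sigma^2)}\,S$, hence $\operatorname{sign}(\partial_q p_2)=-\operatorname{sign}(S)$. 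Recalling from the game orderings that $S>0$ in the SD and H games and $S<0$ in the PD and SH games yields the two monotonicity claims immediately.

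For the lower bound in the SD and H games I would use that $p_2$ is strictly decreasing on the admissible range $0<q<1/2$, so its infimum over this interval is the one-sided limit as $q\to 1/2^-$. At $q=1/2$ one has $T_3=-S$, so $p_2$ decreases toward $\frac{1}{2}-\frac{1}{2}\erf\big(\frac{-S-T_0}{\sigma\sqrt{2}}\big)$, which is precisely the bound \eqref{eq: lowerbound p2}. The strict inequality $>\frac{1}{2}$ then follows from the oddness of $\erf$: since $\erf(z)<0\iff z<0$, it suffices to check $-\frac{S+T_0}{\sigma\sqrt{2}}<0$, i.e. $S+T_0>0$, which holds because in both the SD game ($S>0$, $T_0>1$) and the H game ($S>0$, $T_0\geq 0$) the sum $S+T_0$ is positive.

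Finally, for the small-mutation limit I would compute $\lim_{q\to 0^+}T_3$. Writing $T_3=S(q-1)/q$, the factor $(q-1)/q\to-\infty$ as $q\to 0^+$, so $T_3\to-\infty$ when $S>0$ (SD, H) and $T_3\to+\infty$ when $S<0$ (PD, SH). Using $\lim_{z\to-\infty}\erf(z)=-1$ and $\lim_{z\to+\infty}\erf(z)=1$ then gives $p_2\to\frac{1}{2}-\frac{1}{2}(-1)=1$ in the former case and $p_2\to\frac{1}{2}-\frac{1}{2}(1)=0$ in the latter, as claimed.

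I do not anticipate a genuine obstacle: once the dependence of $T_3$ on $q$ and on $\operatorname{sign}(S)$ is isolated, each assertion collapses to sign bookkeeping. The only point needing a little care is the lower bound, because the formula for $p_2$ was derived under the strict constraint $0<q<1/2$; the value $T_3=-S$ is therefore attained only in the boundary limit $q\to 1/2^-$, so the bound must be justified as an infimum over the open interval rather than by substituting $q=1/2$ directly.
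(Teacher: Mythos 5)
Your proposal is correct and follows essentially the same route as the paper's proof: both reduce to the closed form $p_2=\frac{1}{2}-\frac{1}{2}\erf\bigl(\frac{T_3-T_0}{\sigma\sqrt{2}}\bigr)$, compute $\frac{dp_2}{dq}=-\frac{1}{\sqrt{\pi}}e^{-z^2}\frac{S}{\sigma q^2\sqrt{2}}$ via the chain rule so that the sign of the derivative is opposite to the sign of $S$, obtain the lower bound from the value of $T_3$ at $q=1/2$, and get the small-mutation limit from $T_3\to\mp\infty$ according to the sign of $S$. Your treatment is in fact slightly more careful than the paper's in two respects that are worth keeping: you justify the bound as an infimum over the open interval $0<q<1/2$ rather than by direct substitution at $q=1/2$, and you explicitly verify the strict inequality $>\frac{1}{2}$ from $S+T_0>0$, a step the paper leaves implicit.
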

It is worth mentioning that the monotonicity and the small mutation limits above are independent of the specific values of $S,T_0$ and  $\sigma$. The lower bound \eqref{eq: lowerbound p2} indicates that in SD and H games, $p_2$ is always dominant over $p_1$ and $p_3$ since $p_1+p_2+p_3=1$. For instance, in the SD game, with the specific values $S=0.5, T_0=1.5, \sigma=1$ we obtain 
$$
p_2\geq \frac{1}{2}-\frac{1}{2}\erf(-\sqrt{2})\approx 0.97725,
$$
and in the H game with the specific values $S=0.5, T=0.5, \sigma=1$, we obtain 
$$
p_2\geq \frac{1}{2}-\frac{1}{2}\erf(-1/\sqrt{2})\approx 0.8413.
$$
\begin{proof}
Let $z=\dfrac{(q-1)S/q-T_0}{\sigma\sqrt{2}}$, then
$$
p_2=\frac{1}{2}-\frac{1}{2}\erf{(z)}.
$$
Therefore, using the chain rule and the fact that $\frac{d}{dz}\erf(z)=\frac{2}{\sqrt{\pi}} e^{-z^2}$, we obtain
$$
\frac{dp_2}{dq}=\frac{dp_2}{dz}\frac{dz}{dq}=-\frac{1}{\sqrt{\pi}} e^{-z^2}\frac{S}{\sigma q^2\sqrt{2}}.
$$
Since $S$ is positive in SD and H games but is negative in PD and SH games, $\frac{dp_2}{dq}$ is negative in SD and H games but is positive in PD and SH games. Thus the probability $p_2$, as a function of $q$, is decreasing in SD and H games, but is increasing in PD and SH games. As a consequence, for SD and H games, we have
$$
p_2\geq p_2\vert_{q=1/2}=\frac{1}{2}-\frac{1}{2}\erf\Big(-\frac{S+T_0}{\sigma\sqrt{2}}\Big).
$$
Now we establish the limit of $p_2$ as $q$ tends to $0$. Since $0<q<\frac{1}{2}$ we have
$$
\lim_{q\rightarrow 0}z=\begin{cases}
-\infty,\quad\text{if}~~S>0,\\
+\infty,\quad\text{if}~~S<0.
\end{cases}
$$
Together with the fact that
$$
\lim_{z\rightarrow \pm \infty}\erf(z)=\pm 1,
$$
we obtain
$$
\lim_{q\rightarrow 0} p_2=\frac{1}{2}-\frac{1}{2}\lim_{q\rightarrow 0}\erf(z)=\begin{cases}
1,\quad\text{if}~~S>0,\\
0,\quad\text{if}~~S<0.
\end{cases}
$$
Applying this to the underlying games, we achieve  
$$
\lim_{q\rightarrow 0}p_2=\begin{cases}
1, \quad \text{in SD and H games},\\
0, \quad \text{in PD and SH games}.
\end{cases}
$$
\end{proof}

\subsection{Probability that the replicator-mutator has three equilibria}
Now we compute the probability $p_3$ that the replicator-mutator equation has three equilibria. We consider the following cases, depending on the ordering between $S$ and $s_1< s_2$.
\subsubsection*{(C1) $S>q/(q-1)=s_2$} Then, recalling that when $0<q<1/2$, then $2q^3-q^2-(2q^3-3q^2+q)S=q(2q-1)[q-(q-1)S]<0$. Therefore, $\Delta>0$ always holds. It also holds that $T_3<\dfrac{1-(S+1)q}{1-q}$. Therefore,
$$
\Big\{T: \Delta  >0, T  <\dfrac{1-(S+1)q}{1-q},T <T_3\Big\}=\{T: T<T_3\}.
$$
Thus, when $(q-1)S>q$, the probability $p_3$ that the replicator-mutator dynamics has three equilibria is given by
\begin{align*}
p_3 & = \P(T<T_3)=1-\P(T>T_3)=1-p_2
\\&=\frac{1}{2}+\frac{1}{2}\erf\Big(\frac{T_3-T_0}{\sigma\sqrt{2}}\Big).
\end{align*}
It implies that in this case, $p_1=0$. Note that, SD and H games always satisfy this case since $s_2=\frac{q}{q-1}<0<S$. Thus, we obtain for SD and H games:
$$
p_1=0,\quad p_2=\dfrac{1}{2}-\dfrac{1}{2}\erf\Big(\dfrac{(q-1)S/q-T_0}{\sigma\sqrt{2}}\Big),\quad p_3=1-p_2=\dfrac{1}{2}+\dfrac{1}{2}\erf\Big(\dfrac{(q-1)S/q-T_0}{\sigma\sqrt{2}}\Big).
$$

\subsubsection*{(C2) $S=s_2$}
Then $2q^3-q^2-(2q^3-3q^2+q)S=0$, we obtain  that
\begin{equation}
    T\neq-\dfrac{q(1-q)S+q^2+2q-1}{(1-q)^2} \Leftrightarrow \Delta>0.
\end{equation}
\noindent A direct computation shows that $-\dfrac{q(1-q)S+q^2+2q-1}{(1-q)^2}<T_3$. Thus, the probability that the replicator-mutator equation has 3 equilibria is again $p_3=\P(T<T_3)$. Thus we obtain the same results as in the previous case.
\subsubsection*{(C3) $s_1<S<s_2$}
\noindent In this case since $2q^3-q^2-(2q^3-3q^2+q)S>0$, $\Delta=0$ has two distinct solutions $T_1$ and $T_2$. In addition, we have
\begin{equation*}
    \dfrac{1-(S+1)q}{1-q}>T_3>T_2.
\end{equation*}
It implies that the probability of the replicator-mutator equation having three equilibria is
\begin{align*}
 p_3 & = \P(T<T_1)+\P(T_2<T<T_3) \nonumber \\
&=  \int_{-\infty}^{T_1} \dfrac{1}{\sigma\sqrt{2\pi}}\exp\Big({\dfrac{-(x-T_0)^2}{2\sigma^2}}\Big) dx + \int_{T_2}^{T_3} \dfrac{1}{\sigma\sqrt{2\pi}}\exp\Big({\dfrac{-(x-T_0)^2}{2\sigma^2}}\Big) dx   
\\&=\frac{1}{2}\left[1+\erf\Big(\frac{T_1-T_0}{\sigma\sqrt{2}}\Big)+\erf\Big(\frac{T_3-T_0}{\sigma\sqrt{2}}\Big)-\erf\Big(\frac{T_2-T_0}{\sigma\sqrt{2}}\Big)\right].
\end{align*}
\subsubsection*{(C4) $S \leq s_1$} Similarly to  case $(C3)$, $\Delta=0$ has two distinct solutions $T_1$ and $T_2$ and
\begin{equation*}
    \dfrac{1-(S+1)q}{1-q} \leq T_3,\quad   T_1 \leq \dfrac{1-(S+1)q}{1-q} \leq T_2.
\end{equation*}
Thus the probability that replicator-mutator equation has three equilibria is given by
\begin{align*}
p_3 & = \P(T<T_1) 
\\& = \int_{-\infty}^{T_1} \dfrac{1}{\sigma\sqrt{2\pi}}\exp\Big({\dfrac{-(x-T_0)^2}{2\sigma^2}}\Big) dx
\\&=\frac{1}{2}+\frac{1}{2}\erf\Big(\frac{T_1-T_0}{\sigma\sqrt{2}}\Big).
\end{align*}
Bringing all cases together, we obtain, in PD and SH games, the probability that the replicator-mutator equation has three equilibria is
\begin{equation*}
    p_3=\begin{cases}
    \dfrac{1}{2}\left[1+\erf\Big(\dfrac{T_1-T_0}{\sigma\sqrt{2}}\Big)\right], \quad\text{if}\quad  S \leq s_1,\\\\
    \dfrac{1}{2}\left[1+\erf\Big(\dfrac{T_1-T_0}{\sigma\sqrt{2}}\Big)+\erf\Big(\dfrac{T_3-T_0}{\sigma\sqrt{2}}\Big)-\erf\Big(\dfrac{T_2-T_0}{\sigma\sqrt{2}}\Big)\right], \quad\text{if}\quad  s_1<S<s_2,\\\\
     \dfrac{1}{2}\left[1+\erf\Big(\dfrac{T_3-T_0}{\sigma\sqrt{2}}\Big)\right], \quad\text{if}\quad S \geq s_2.
    \end{cases}
\end{equation*}
The three cases in the  formula above can be written in terms of $q$ as follows
\begin{equation*}
    p_3=\begin{cases}
    \dfrac{1}{2}\left[1+\erf\Big(\dfrac{T_1-T_0}{\sigma\sqrt{2}}\Big)\right], \quad\text{if}\quad  0\leq q\leq q_1,\\\\
    \dfrac{1}{2}\left[1+\erf\Big(\dfrac{T_1-T_0}{\sigma\sqrt{2}}\Big)+\erf\Big(\dfrac{T_3-T_0}{\sigma\sqrt{2}}\Big)-\erf\Big(\dfrac{T_2-T_0}{\sigma\sqrt{2}}\Big)\right], \quad\text{if}\quad  q_1<q<q_2,\\\\
     \dfrac{1}{2}\left[1+\erf\Big(\dfrac{T_3-T_0}{\sigma\sqrt{2}}\Big)\right], \quad\text{if}\quad q_2 \leq q\leq \frac{1}{2},
    \end{cases}
\end{equation*}
where $q_1$ and $q_2$ are respectively unique solutions of
$$
\frac{q(q-1)}{1-2q}=S\quad \text{and}\quad \frac{q}{q-1}=S.
$$
We note that when $q_2\leq q\leq \frac{1}{2}$ then $p_3=1-p_2$, thus $p_1=0$.
Using the above formula, in principle, we can derive qualitative properties for $p_3$ as a function of $q$ as in Lemma \ref{lem: properties of p2}. However, to compute the derivatives of $T_1$ and $T_2$ with respect to $q$ and determine their signs for general $S$ are very complicated for general $S$. In the following example, we demonstrate such a result for the specific value $S=-0.5$.
\begin{example}
For $S=-0.5$, we obtain the following formula for $p_3$ depending on the value of $q$
\begin{equation*}
    p_3=\begin{cases}
    \dfrac{1}{2}\left[1+\erf\Big(\dfrac{T_1-T_0}{\sigma\sqrt{2}}\Big)\right], \quad\text{if}\quad  q \leq \frac{2-\sqrt{2}}{2},\\\\
    \dfrac{1}{2}\left[1+\erf\Big(\dfrac{T_1-T_0}{\sigma\sqrt{2}}\Big)+\erf\Big(\dfrac{T_3-T_0}{\sigma\sqrt{2}}\Big)-\erf\Big(\dfrac{T_2-T_0}{\sigma\sqrt{2}}\Big)\right], \quad\text{if}\quad \frac{2-\sqrt{2}}{2}\leq q\leq \frac{1}{3} ,\\\\
     \dfrac{1}{2}\left[1+\erf\Big(\dfrac{T_3-T_0}{\sigma\sqrt{2}}\Big)\right], \quad\text{if}\quad \frac{1}{3}\leq q\leq \frac{1}{2}.
    \end{cases}
\end{equation*}
From the above analytical formula, we can study the beheviour of $p_3$ as a function of $q$. For $0<q<\frac{2-\sqrt{2}}{2}$, we have
$$
\frac{dp_3}{dq}=\frac{1}{\sigma\sqrt{2\pi}}\exp\Big[-\frac{(T_1-T_0)^2}{2\sigma^2}\Big]\frac{dT_1}{dq}<0.
$$
For $\frac{2-\sqrt{2}}{2}<q<\frac{1}{3}$,
$$
\frac{dp_3}{dq}=\frac{1}{\sigma\sqrt{2\pi}}\bigg(\exp\Big[-\frac{(T_1-T_0)^2}{2\sigma^2}\Big]\frac{dT_1}{dq}+\exp\Big[-\frac{(T_3-T_0)^2}{2\sigma^2}\Big]\frac{dT_3}{dq}-\exp\Big[-\frac{(T_2-T_0)^2}{2\sigma^2}\Big]\frac{dT_2}{dq}\bigg)>0.
$$
For $\frac{1}{3}<q<\frac{1}{2}$,
$$
\frac{dp_3}{dq}=\frac{1}{\sigma\sqrt{2\pi}}\exp\Big[-\frac{(T_3-T_0)^2}{2\sigma^2}\Big]\frac{dT_3}{dq}<0
$$
Thus as a function of $q$, $p_3$ is decreasing in $(0,\frac{2-\sqrt{2}}{2})$, is increasing in $(\frac{2-\sqrt{2}}{2}, \frac{1}{3})$ and then is decreasing again in $(\frac{1}{3},\frac{1}{2})$.
\end{example}
 \begin{figure}
\centering
\subfigure[PD]{
    \includegraphics[scale=1]{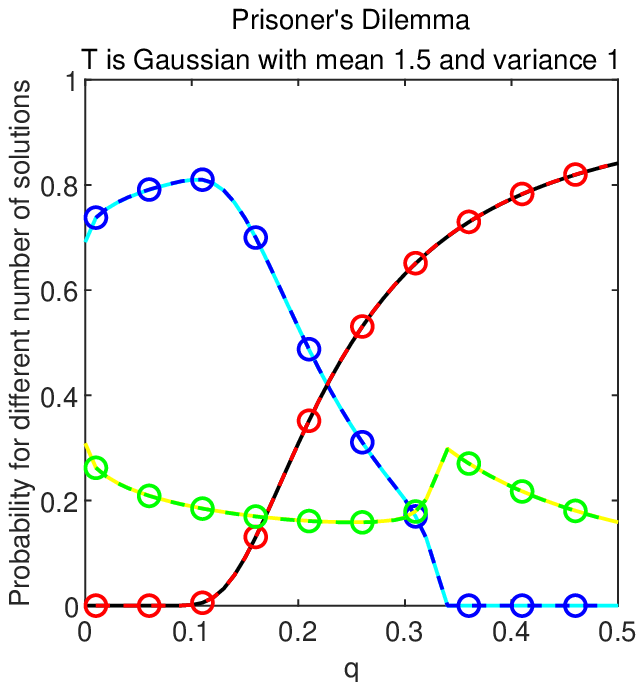}
    }
    \quad
    \subfigure[SD]{
     \includegraphics[scale=1]{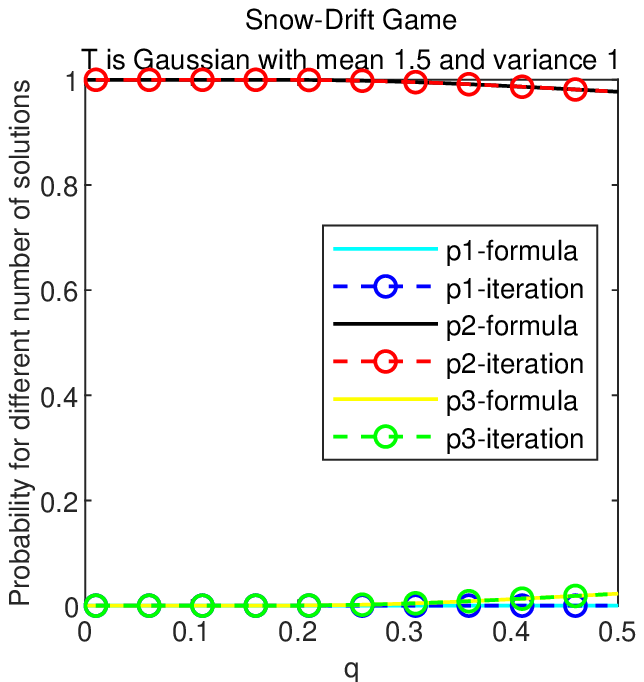}
    }
    \quad
    \subfigure[SH]{
    \includegraphics[scale=1]{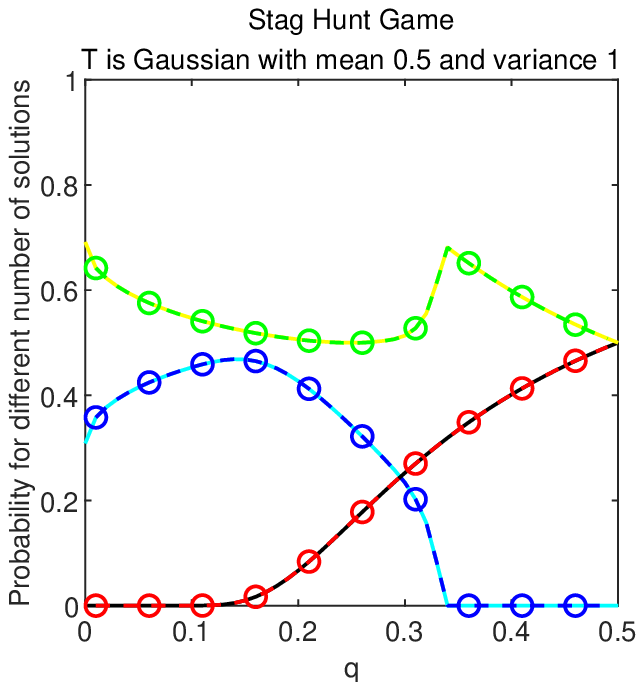}
    }
    \quad
    \subfigure[H]{
     \includegraphics[scale=1]{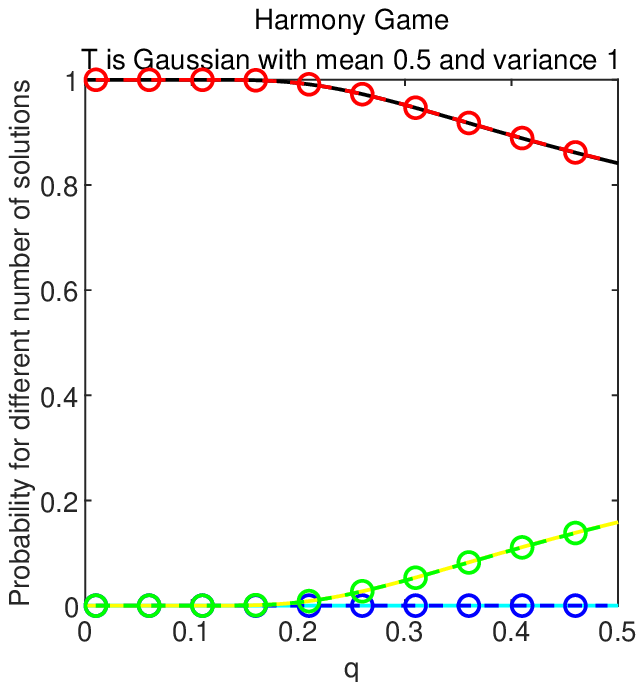}
    }
\caption{Probabilities $p_1, p_2 ,p_3$ that the replicator-mutator dynamics has respectively $1,2, 3$ equilibria as functions of the mutation rate $q$ in PD, SD, SH and H games when $T$ is random and $S$ is fixed. The values from analytical analysis and numerical samplings are in accordance.} 
\label{fig: fig3}
\end{figure}

\subsection{Numerical simulations}
\label{sec: Trandom numerics}
In Figure \ref{fig: fig3} we plot the probabilities that PD, SD, SH and H games have $1, 2$ or $3$ equilibria for different values of $q$ using the analytical formula obtained in the previous section (see Theorem 1). For validation, these probabilities are also computed by sampling over $10^6$ realizations of $T$ from the normal distribution $\N(T_0,1)$ and then calculating the solutions of $g(t)=0$. In these simulations, the value $S_0$ (respectively, $T_0$) is taken to be the middle point in the interval that $S$ (respectively, $T$) belongs to in each corresponding game, that is $S_0=-0.5$ in PD and SH games  and $S_0=0.5$ in SD and H games (respectively, $T_0=1.5$ in PD and SD games, $T_0=0.5$ in SH and H games).
It can be clearly seen that the simulation results are in accordance with the analytical ones. In particular, 
we observe that, as a function of $q$, the probability $p_2$ is decreasing in SD and H games, and is increasing in PD and SH games.


\label{sec: Trandom sigma numerics}
In Figure \ref{fig: fig6} we plot the probabilities that PD, SD, SH and H games have $1, 2$ or $3$ equilibria for different values of $\sigma$ (fixing $q=0.25$), using the analytical formula obtained in the previous section. The value $S_0$ (respectively, $T_0$) is taken to be the middle point in the interval that $S$ (respectively, $T$) belongs to in each corresponding game, that is $S_0=-0.5$ in PD and SH games  and $S_0=0.5$ in SD and H games (respectively, $T_0=1.5$ in PD and SD games, $T_0=0.5$ in SH and H games).

\begin{figure}[htbp]
\centering
\subfigure[PD]{
    \includegraphics[scale=1]{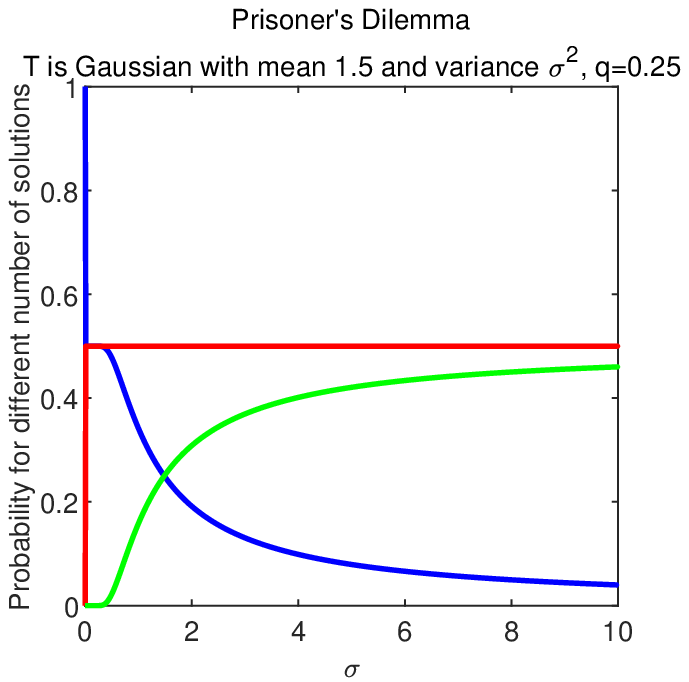}
    }
    \quad
    \subfigure[SD]{
     \includegraphics[scale=1]{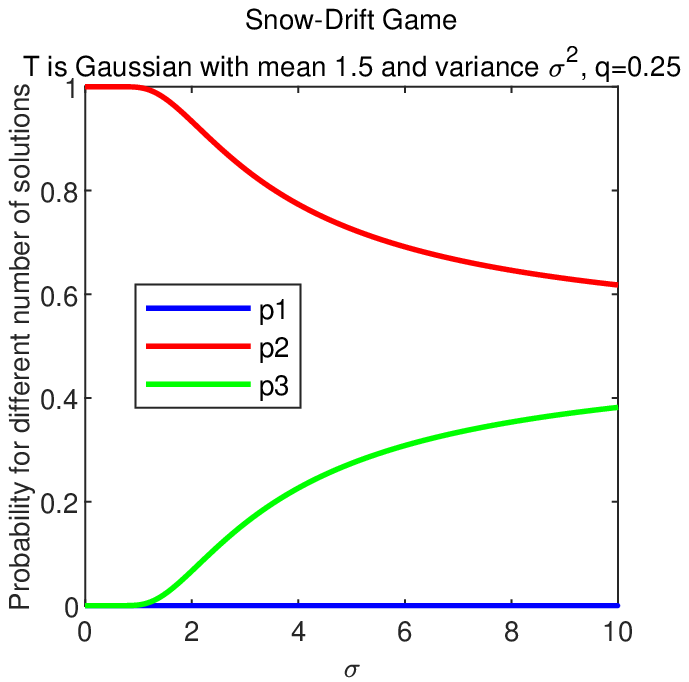}
    }
    \quad
    \subfigure[SH]{
    \includegraphics[scale=1]{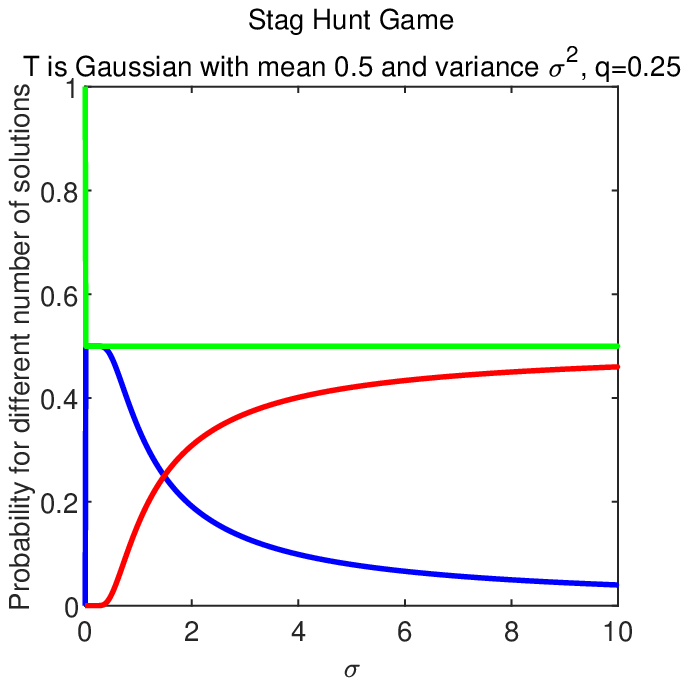}
    }
    \quad
    \subfigure[H]{
     \includegraphics[scale=1]{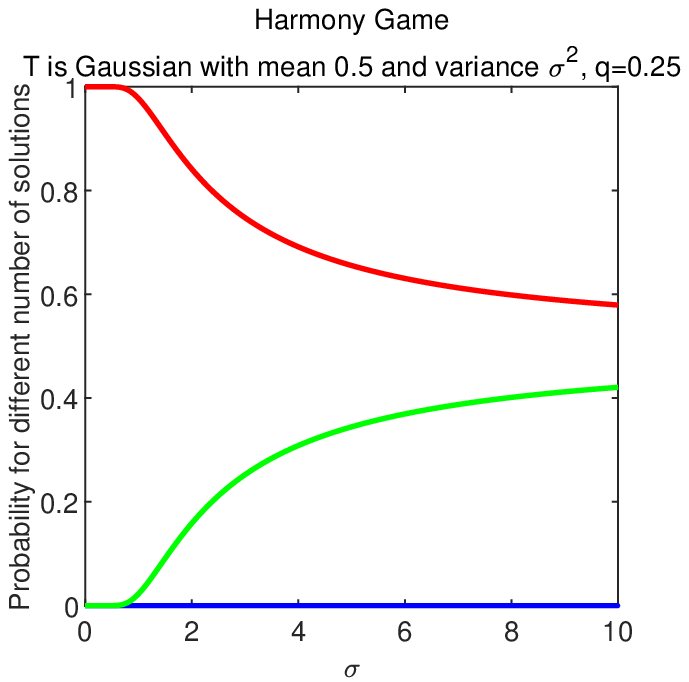}
    }
\caption{The probabilities $p_1 ,p_2 ,p_3$ that the replicator-mutator dynamics has respectively $1,2, 3$ equilibria as functions of $\sigma$, the variance of the random variable T, in PD, SD, SH and H games when $T$ is random and $S$ is fixed.} 
\label{fig: fig6}
\end{figure}

\newpage
\section{$S$ is random}
\label{sec: S random}
In this section, we consider the case where only $S$ is random, assuming that
$$
S=S_0+\varepsilon_S
$$
where $\varepsilon_S \sim N(0,\eta^2)$ and $T=T_0<1$. Then we have
$$
S \sim N(S_0,\eta^2).
$$
As in Section \ref{sec: T random}, we will compute the probability that the replicator-mutator has $k\in\{1,2,3\}$ equilibria, which is the same as the probability that the polynomial $g$ defined in \eqref{eq: g} has $k-1\in\{0,1,2\}$ positive roots. This has been characterized in three corresponding cases (a), (b) and (c) in Section \ref{sec: equilibrium points}.
\subsection{The probability that the replicator-mutator has two equilibria}
In this section, we compute the probability that the replicator-mutator has 2 equilibria, which amounts to computing the probability that the quadratic polynomial $g$ defined in \eqref{eq: g} has 1 positive root. This can happen in two different cases below.
\\ \\
\noindent \textbf{(C1)} $\Delta=0$, $q+a-c<0$. \\ 
Let $\hat{T}:=T-1$ and
 \begin{align} X(\hat{T},S)&:=q+a-c=(1-q)(T_0-1)+qS, \label{S1}\\
Y(\hat{T},S)&:=c-q=q(T_0-1)+(1-q)S.\label{S2}
   \end{align}
 Then, the equation $\Delta=0$ can be rewritten as
   \begin{equation}
       X^2+4qY=-4q^2. \label{S3}
   \end{equation}
Substituting $X$ and $Y$ in (\ref{S1}) and (\ref{S2}) to \eqref{S3}, we get
   $$
   S=\frac{-(1-q)(T_0+1)\pm2\sqrt{T_0(1-2q)}}{q},
   $$
   with $\frac{-1-(T_0-1)q}{1-q}<S<\frac{-qT_0}{1-q}$.
Since $S$ is a continuous random variable, the probability that this case occurs is $0$.
 \\ \\  
   
\noindent \textbf{(C2)} $\Delta>0$ and $c\geq0$.\\

It follows from \eqref{S2} that $c\geq 0$ is equivalent to $Y\geq -q$, which gives 
  \begin{equation*}
       S\geq\frac{-qT_0}{1-q}.
   \end{equation*}
From (\ref{S1}) and (\ref{S2}), we have 
\begin{align*}
      \Delta&=X^2+4qY+4q^2\\
      &=((1-q)(T_0-1)+qS)^2+4q(q(T_0-1)+(1-q)S)+4q^2\\
      &=q^2S^2+2q(1-q)(T_0+1)S+4q^2T_0+(1-q)^2(T_0-1)^2.
      \end{align*}
Thus $\Delta$ can be seen as a quadratic polynomial of $S$ with a leading coefficient $q^2>0$ and its discriminant is given by 
   \begin{equation*}
       \hat{\Delta}=16q^2T_0(1-2q)>0
   \end{equation*}
since for $0<q<\frac{1}{2}$. Thus The equation $\Delta=0$ have two real solutions, $S_1<S_2$, given by
   \begin{equation}
    S_1=\frac{(q-1)(T_0+1)-2\sqrt{T_0-2qT_0}}{q},\quad S_2=\frac{(q-1)(T_0+1)+2\sqrt{T_0-2qT_0}}{q}. \label{S1S2}
     \end{equation}
To proceed, we need to compare $\frac{-qT_0}{1-q}$ with $S_2$. We have
   \begin{align*}
     \frac{\frac{-qT_0}{1-q}}{S_2}&=\frac{-qT_0}{1-q} \frac{q}{(q-1)(T_0+1)+2\sqrt{T_0-2qT_0}}\\
         &=\frac{q^2T_0}{q^2T_0+(q-1)^2+(1-2q)T_0+2\sqrt{T_0-2qT_0}}<1.
   \end{align*}
Since both$\frac{-qT_0}{1-q}$ and $S_2$ are negative, it follows that  $\frac{-qT_0}{1-q}>S_2$. Therefore
\begin{align*}
\mathbb{P}(\Delta>0,~Y\geq -q)=\mathbb{P}(S>\frac{-qT_0}{1-q}).    
\end{align*}
Since $S\sim \mathcal{N}(S_0,\eta^2)$, the probability that $g$ has a unique positive root, which is the probability that the replicator-mutator dynamics have two equilibria, is given by
   \begin{equation}
       \begin{aligned}
           p_2&=\int_{\frac{-qT_0}{1-q}}^{\infty} \frac{1}{\eta\sqrt{2\pi}} e^\frac{-(x-S_0)^2}{2\eta^2} dx\\
           &=\frac{1}{2}-\frac{1}{2}\erf(\frac{\frac{-qT_0}{1-q}-S_0}{\sqrt{2}\eta}).
       \end{aligned}
       \label{S4}
   \end{equation}
\begin{lemma}
As a function of $q$, the probability $p_2$ is increasing in all games. As a consequence,
$$
\frac{1}{2}-\frac{1}{2}\erf(\frac{-S_0}{\sqrt{2}\eta})< p_2< \frac{1}{2}-\frac{1}{2}\erf(\frac{-T_0-S_0}{\sqrt{2}\eta}).
$$
\end{lemma}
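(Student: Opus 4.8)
The plan is to mirror the argument used for Lemma~\ref{lem: properties of p2} in the $T$-random case, because the formula \eqref{S4} for $p_2$ has exactly the same structure: it is $\frac{1}{2} - \frac{1}{2}\erf(z(q))$, where now $z(q) = \frac{1}{\sqrt{2}\,\eta}\bigl(\frac{-qT_0}{1-q} - S_0\bigr)$ depends on $q$ only through the single term $\frac{-qT_0}{1-q}$. First I would differentiate $p_2$ with respect to $q$ by the chain rule, using $\frac{d}{dz}\erf(z) = \frac{2}{\sqrt{\pi}}e^{-z^2}$, which yields $\frac{dp_2}{dq} = -\frac{1}{\sqrt{\pi}}e^{-z^2}\,\frac{dz}{dq}$.

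Next I would compute $\frac{dz}{dq}$. Since $S_0$ is constant and $\frac{d}{dq}\bigl(\frac{q}{1-q}\bigr) = \frac{1}{(1-q)^2}$, I obtain $\frac{dz}{dq} = -\frac{T_0}{\sqrt{2}\,\eta\,(1-q)^2}$, and hence $\frac{dp_2}{dq} = \frac{T_0}{\sqrt{2\pi}\,\eta\,(1-q)^2}\,e^{-z^2}$. The decisive observation is that in each of the four social dilemmas the fixed value $T_0$ is positive, so every factor on the right is positive and $\frac{dp_2}{dq} > 0$ throughout $(0,1/2)$. This shows $p_2$ is strictly increasing in $q$, and the conclusion is uniform across all games — in contrast to the $T$-random case, where the sign of $S$ split the monotonicity into two opposite behaviours.

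Finally, since $q$ ranges over the open interval $(0,1/2)$ and $p_2$ is strictly increasing there, the two-sided bound follows by passing to the endpoints. As $q \to 0^+$ the term $\frac{-qT_0}{1-q} \to 0$, so $z \to \frac{-S_0}{\sqrt{2}\,\eta}$ and $p_2 \to \frac{1}{2} - \frac{1}{2}\erf\bigl(\frac{-S_0}{\sqrt{2}\,\eta}\bigr)$, which gives the strict lower bound. As $q \to (1/2)^-$ the term $\frac{-qT_0}{1-q} \to -T_0$, so $z \to \frac{-T_0 - S_0}{\sqrt{2}\,\eta}$ and $p_2 \to \frac{1}{2} - \frac{1}{2}\erf\bigl(\frac{-T_0 - S_0}{\sqrt{2}\,\eta}\bigr)$, the strict upper bound.

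The only genuine care needed is in two small points rather than in any hard computation: confirming $T_0 > 0$ in every game (so that the monotonicity holds uniformly), and ensuring the inequalities in the conclusion are strict. The latter is automatic because $(0,1/2)$ is open, so the endpoint expressions are limiting values that are never attained; strict monotonicity then forces strict inequalities. I expect no substantive obstacle, since the whole structure parallels the already-proved Lemma~\ref{lem: properties of p2}.
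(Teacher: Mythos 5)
Your proposal is correct and follows essentially the same argument as the paper: write $p_2=\frac{1}{2}-\frac{1}{2}\erf(z(q))$ with $z(q)=\frac{1}{\sqrt{2}\eta}\bigl(\frac{-qT_0}{1-q}-S_0\bigr)$, apply the chain rule to get $\frac{dp_2}{dq}=\frac{T_0}{\sqrt{2\pi}\,\eta(1-q)^2}e^{-z^2}>0$ using $T_0>0$, and obtain the bounds from the values of $z$ at $q=0$ and $q=1/2$. Your added remark on why the inequalities are strict (the endpoints of $(0,1/2)$ are never attained) is a small refinement of the paper's presentation, not a different method.
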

\begin{proof}
Let $z:=\frac{\frac{-q T_0}{1-q}-S_0}{\eta\sqrt{2}}$. Then $p_2=\frac{1}{2}-\frac{1}{2}\erf(z)$. Thus, since $T_0>0$, we have
\begin{align*}
\frac{d p_2}{dq}=\frac{dp_2}{dz}\frac{dz}{dq}= \frac{1}{\sqrt{\pi}}e^{-z^2}\frac{T_0}{\eta\sqrt{2}(1-q)^2}>0.
\end{align*}
Thus $p_2$ is increasing as a function of $q$, see Figure \ref{fig:S_1_figure}. As a consequence, we obtain the following lower and upper estimates for $p_2$:
$$
p_2\vert_{q=0}=\frac{1}{2}-\frac{1}{2}\erf(\frac{-S_0}{\sqrt{2}\eta})< p_2<p_e\vert_{q=1/2}= \frac{1}{2}-\frac{1}{2}\erf(\frac{-T_0-S_0}{\sqrt{2}\eta}).$$
\end{proof}
The following example illustrates the above lemma.
\begin{example}
In the SD game, with the specific values $S=0.5, T_0=1.5, \eta=1$ we get
$$
\frac{1}{2}-\frac{1}{2}\erf(-0.5/\sqrt{2})=0.691462<p_2<\frac{1}{2}-\frac{1}{2}\erf(-2/\sqrt{2})=0.97725. 
$$
In the PD game, with $T_0=1.5, S_0=-0.5, \eta=1$, we get
$$
\frac{1}{2}-\frac{1}{2}\erf(0.5/\sqrt{2})=0.308538<p_2<\frac{1}{2}-\frac{1}{2}\erf(-1/\sqrt{2})=0.8413.
$$
In the SH game, with $T_0=0.5, S=-0.5,\eta=1$, we get
$$
\frac{1}{2}-\frac{1}{2}\erf(0.5/\sqrt{2})=0.308538<p_2<\frac{1}{2}-\frac{1}{2}\erf(0)=0.5.
$$
In the H game, with $T_0=S_0=0.5,\eta=1$, 
we get
$$
\frac{1}{2}-\frac{1}{2}\erf(-0.5/\sqrt{2})=0.691462<p_2<\frac{1}{2}-\frac{1}{2}\erf(-1/\sqrt{2})=0.8413.
$$
We notice that in SD and H games, $p_2$ is always dominant.   
\end{example}
\subsection{The probability that the replicator-mutator has three equilibria}
In this section, we compute the probability $p_3$ that the replicator-mutator has three equilibria, that is the probability that $g$ has two positive roots.
We have
\begin{align*}
p_3&=\mathbb{P}(\Delta>0,~ c<0,~ (-q-a+c)>0)=\mathbb{P}(\Delta>0,~ Y<-q,~ X<0)
\\&=\mathbb{P}(\Delta>0,~ S<\frac{(q-1)(T_0-1)}{q},~ S<\frac{-qT_0}{1-q}).
\end{align*}
To proceed, by comparing $\frac{(q-1)(T_0-1)}{q}$ with $\frac{-qT_0}{1-q}$ and with $S_1$, $S_2$, we obtain 2 cases.
\begin{enumerate}[(i)]
    \item When $T_0>\frac{(1-q)^2}{1-2q}$,
    then we have
    $$\frac{-qT_0}{1-q}>S_2>\frac{(q-1)(T_0-1)}{q}>S_1.$$
    Thus, the probability $p_3$ in this case is given by
    $$
    p_3=\mathbb{P}(S<S_1).
    $$
    \item When $T_0<\frac{(1-q)^2}{1-2q}$,
    then we have
    $$\frac{(q-1)(T_0-1)}{q}>\frac{-qT_0}{1-q}>S_2>S_1.$$
    Thus, the probability $p_3$ in this case is given by
    $$
    p_3=\mathbb{P}(S<S_1)+\mathbb{P}(S_2<S<\frac{-qT_0}{1-q}).
    $$
\end{enumerate}
Therefore, since $S\sim \mathcal{N}(S_0,\eta^2)$, the probability of having three equilibria is given by
\begin{equation*}
p_3=
\left\{
   \begin{aligned}
    &\frac{1}{2}+\frac{1}{2}\erf(\frac{S_1-S_0}{\sqrt{2}\eta}), &\text{if } &T_0>\frac{(1-q)^2}{1-2q},\\
    &\frac{1}{2}+\frac{1}{2}\erf(\frac{S_1-S_0}{\sqrt{2}\eta})+\frac{1}{2}\erf(\frac{\frac{-qT_0}{1-q}-S_0}{\sqrt(2}\eta)-\frac{1}{2}\erf(\frac{S_2-S_0}{\sqrt{2}\eta}), &\text{if } &T_0<\frac{(1-q)^2}{1-2q}.
   \end{aligned} 
\right.
\end{equation*}
\noindent In conclusion:
\begin{align*}
p_2&=\frac{1}{2}-\frac{1}{2}\erf(-\frac{\frac{qT_0}{1-q}-S_0}{\sqrt{2}\eta}),\\
p_3&=
\left\{
   \begin{aligned}
    &\frac{1}{2}+\frac{1}{2}\erf(\frac{S_1-S_0}{\sqrt{2}\eta}), &\text{if } &T_0>\frac{(1-q)^2}{1-2q},\\
    &\frac{1}{2}+\frac{1}{2}\erf(\frac{S_1-S_0}{\sqrt{2}\eta})+\frac{1}{2}\erf(\frac{\frac{-qT_0}{1-q}-S_0}{\sqrt(2}\eta)-\frac{1}{2}\erf(\frac{S_2-S_0}{\sqrt{2}\eta}), &\text{if } &T_0<\frac{(1-q)^2}{1-2q}.
   \end{aligned} 
\right.,\\
p_1&=1-p_2-p_3.
  \end{align*}
\subsection{Numerical simulations}
In Figure \ref{fig:S_1_figure} we show the probabilities that PD, SD, SH and H games have $1, 2$ or $3$ equilibria for different values of $q$ using the analytical formula obtained in the previous section (see Theorem 2). Moreover, for validation, these probabilities were calculated by sampling over $10^6$ realizations of $S$ from the normal distribution $\N(S_0,1)$ and then calculating the solutions of ~$g(t)=0$. The values of $T_0$ and $S_0$ are the middle points of the corresponding intervals as in Section \ref{sec: Trandom numerics}. 
It can be seen that the numerical results are clearly in accordance with theoretical ones.
We can also observe that,  $p_2$ is always increasing as a function of $q$, as stated in Theorem 2.

\begin{figure}[htbp]
    \centering
    \subfigure[PD]{
    \includegraphics[scale=1]{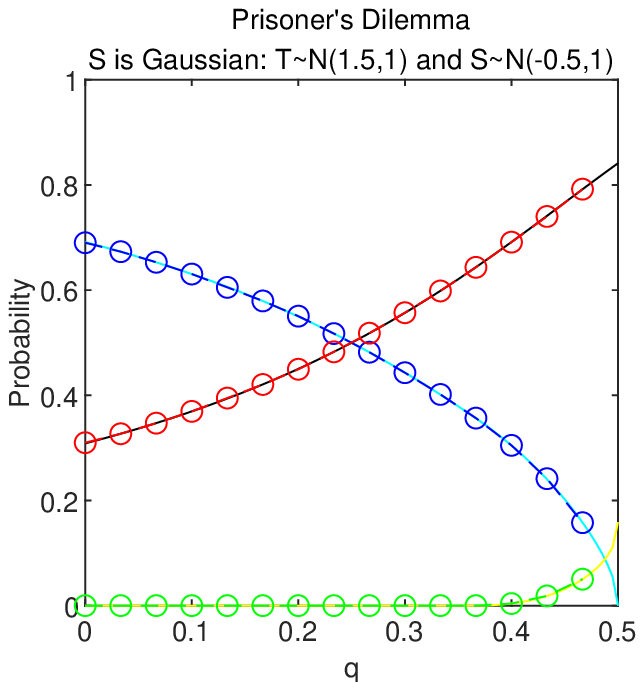}
    }
    \quad
    \subfigure[SD]{
     \includegraphics[scale=1]{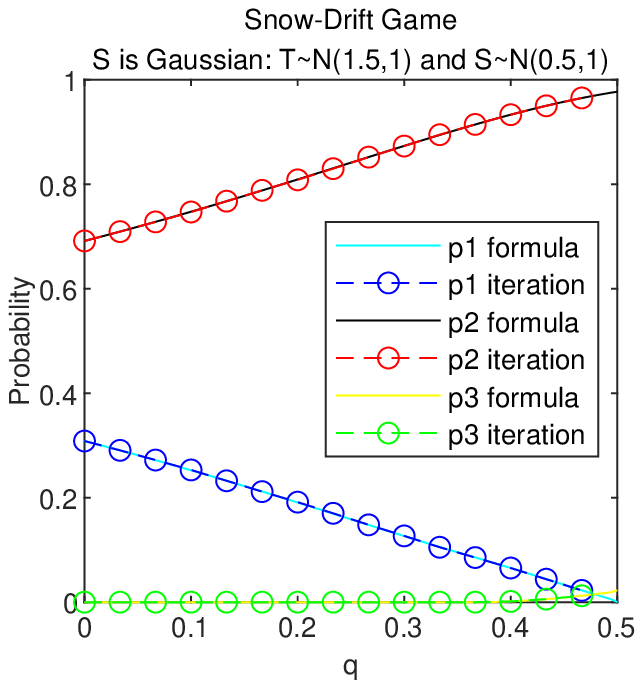}
    }
    \quad
    \subfigure[SH]{
    \includegraphics[scale=1]{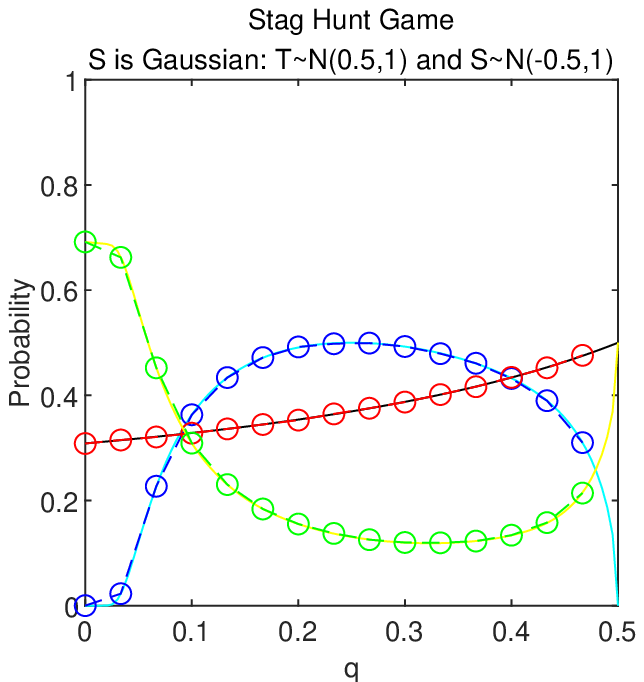}
    }
    \quad
    \subfigure[H]{
     \includegraphics[scale=1]{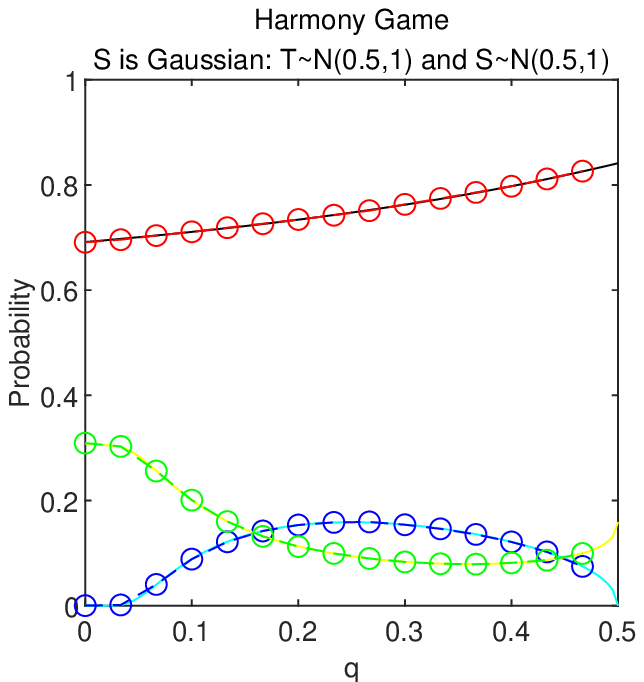}
    }
    \caption{ 
    Probabilities $p_1 ,p_2 ,p_3$ that the replicator-mutator dynamics has respectively $1,2, 3$ equilibria, as functions of the mutation strength $q$ in PD, SD, SH and H games when $S$ is random and $T$ is fixed. The values from analytical analysis and numerical samplings are in accordance.}
    \label{fig:S_1_figure}
\end{figure}

\section{Both $T$ and $S$ are random: numerical investigations}
\label{sec: both T and S are random}
In this section, we numerically compute the probabilities $p_1$, $p_2$ and $p_3$ when both $T$ and $S$ are random. In Figure \ref{fig: fig5}, we show  their values obtained from averaging over $10^6$ random samples of $T$ and $S$ with the corresponding distribution in each game. We observe that $p_2$ tends to increase in all games, while $p_1$ and $p_3$ exhibit more complex behaviours. We aim to study this more complex case analytically in future work. 

\begin{figure}[htbp]
\centering
\subfigure[PD]{
    \includegraphics[scale=1]{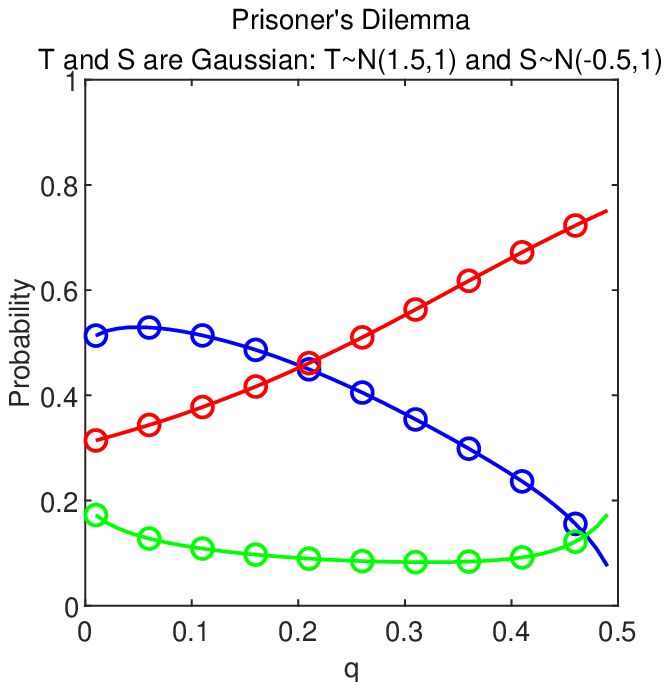}
    }
    \quad
    \subfigure[SD]{
     \includegraphics[scale=1]{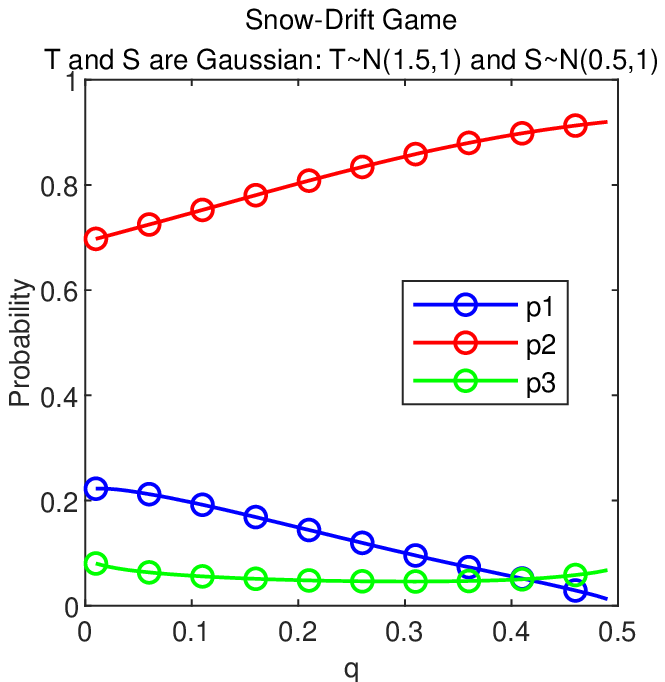}
    }
    \quad
    \subfigure[SH]{
    \includegraphics[scale=1]{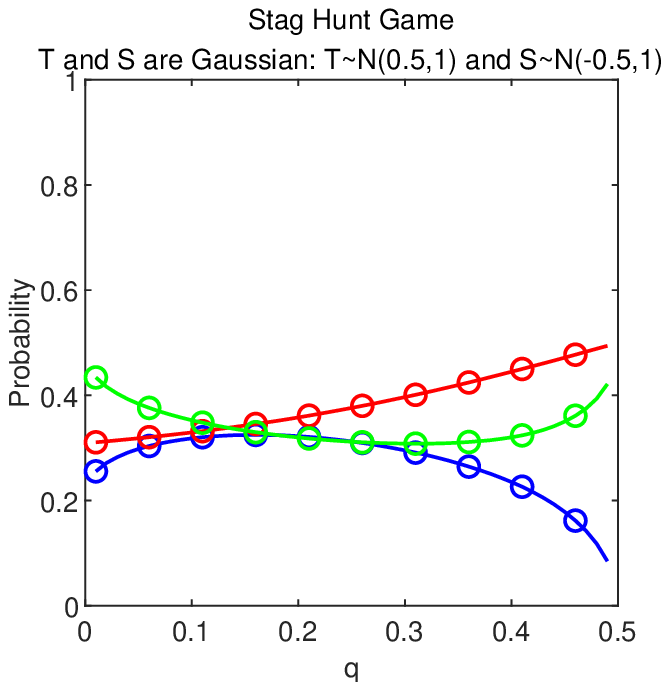}
    }
    \quad
    \subfigure[H]{
     \includegraphics[scale=1]{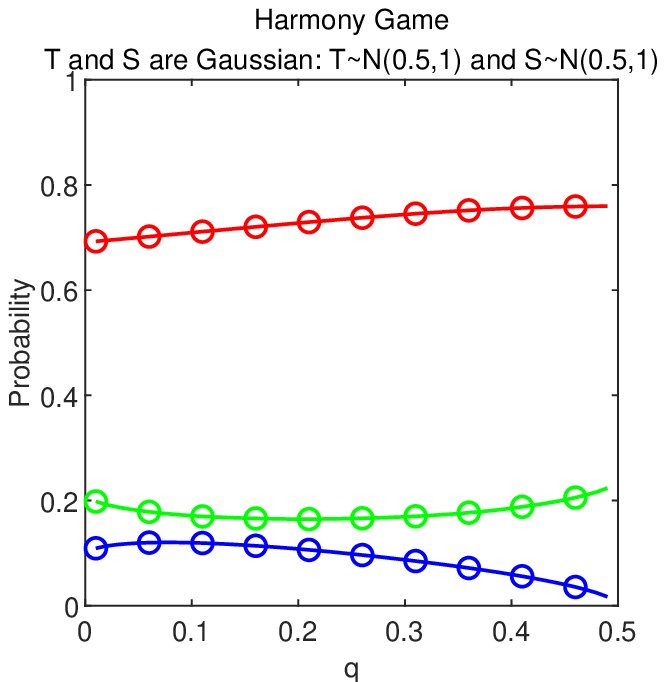}
    }
\caption{
Probabilities $p_1, p_2, p_3$ that the replicator-mutator dynamics has respectively $1, 2, 3$ equilibria, as functions of the mutation strength $q$ in PD, SD, SH and H games when both $T$ and $S$ are random. The values are numerically obtained from $10^6$ samplings.}
\label{fig: fig5}
\end{figure}
\section{Summary and Outlook}
\label{sec: summary}
In this paper, we have studied pair-wise social dilemmas where the payoff entries are random variables. The randomness is necessary to capture the uncertainty that is unavoidable in practical applications, which may come from different sources, both subjective and objective, such as lack of data, fluctuating environment as well as human estimate errors. We have focused on four important social dilemma games, namely Prisoner's Dilemma, the Snow-Drift game, the Stag-Hunt game and the Harmony game. For each game, we have analytically computed, and numerically validated, the probability that the replicator-mutator dynamics has a certain number of equilibria, studying their qualitative behaviour as a function of the mutation rate. 
Our results have clearly shown that the mutation rate and randomness from the payoff matrix have a  strong impact on the equilibrium outcomes. Thus, our analysis has provided novel theoretical contributions to the understanding of the impact of uncertainty on the behavioural diversity in a complex dynamical system.   

Here we have assumed that the payoff entries are standard normal distributions; however, from formulas such as Equation \eqref{eq: p2T}, our results can be easily extended to other distributions. One natural and challenging problem for future work is to generalize the equilibrium analysis of the present work to multi-player and multi-strategy games where the payoff entries satisfy more complex conditions. Another direction is to trajectorially characterize statistical properties of the full dynamical systems.

\end{document}